\newtheorem{lemma}{Property}
\DeclareMathOperator*{\argmax}{arg\,max}
\begin{document}

\title{Towards a Multi-Subject Analysis of Neural Connectivity}
\date{}
\author{C. J. Oates$^1$\footnote{E-mail: c.oates@warwick.ac.uk}, L. Costa$^1$\footnote{E-mail: l.c.carneiro-da-costa@warwick.ac.uk}, T. E. Nichols$^{1,2}$\footnote{E-mail: t.e.nichols@warwick.ac.uk} \\ $^1$Department of Statistics and $^2$Warwick Manufacturing Group,\\ University of Warwick, Coventry, CV4 7AL, UK.}
\maketitle

\begin{abstract} 
Directed acyclic graphs (DAGs) and associated probability models are widely used to model neural connectivity and communication channels. 
In many experiments, data are collected from multiple subjects whose connectivities may differ but are likely to share many features. 
In such circumstances it is natural to leverage similarity between subjects to improve statistical efficiency.
The first exact algorithm for estimation of multiple related DAGs was recently proposed by \cite{Oates6}; in this letter we present examples and discuss implications of the methodology as applied to the analysis of fMRI data from a multi-subject experiment.
Elicitation of tuning parameters requires care and we illustrate how this may proceed retrospectively based on technical replicate data.
In addition to joint learning of subject-specific connectivity, we allow for heterogeneous collections of subjects and simultaneously estimate relationships between the subjects themselves.
This letter aims to highlight the potential for exact estimation in the multi-subject setting.
\end{abstract}

\section{Introduction}

Probabilistic graphical models are widely used to model neural connectivity and the transfer of information between regions of the brain \citep{Poldrack}.  
In brief, vertices indexed by $1,\dots,P$ in a directed acyclic graph (DAG) $G$ are identified with random variables $Y_i$ that represent neural activity at a particular region and edges between the vertices describe conditional independence statements, whose interpretation depends on both the underlying statistical model for the data and the context in which data are obtained.  
In many neuroscience applications, subject-specific connectivity (i.e. the set of edges) itself is uncertain and an important challenge is to infer this structure from experimental data \citep{Friston}.
There has been considerable statistical research into inference for graphical models in general over the last decade, with particular emphasis on Bayesian networks \citep[BNs;][]{Chickering,Friedman,Ellis}, Gaussian graphical models \citep[GGMs;][]{Meinshausen,Chandrasekaran} and discrete graphical models \citep{Loh}.  Nevertheless there remain two substantive barriers to the inference of graphical models from data:
Firstly, inferred graphical structure is often not robust to reasonable perturbation of the underlying data \citep{Claassen}. This is due to a combination of the high variance of graphical estimators themselves and any additional variance that is introduced if the structure learning algorithm returns only an approximation to the intended estimator. 
Secondly, conventional model selection criteria for graphical models are often biased towards selecting more complex models (i.e. more edges), since there are typically very many models in which the data-generating model is nested; these models are also able to fit the data well \citep[albeit with some coefficients close or equal to zero;][]{Consonni}. 
Consequently many more data are required to exclude more complex alternatives.
Taken together, these factors limit the extent to which neural connectivity can be accurately recovered from data.

Many experimental designs in neuroscience involve data collected on multiple subjects, indexed by $1,\dots,K$, that may differ with respect to neural connectivity, such that corresponding graphs $G^{(k)}$ may be subject-specific \citep{Sugihara,Li}.
Efforts to analyse multi-subject experimental data have previously focussed on hierarchical models and imaging data, rather then connectivity {\it per se} \citep{Mumford,Sanyal,Badillo,Marquand}.
Given that elements of neural architecture are largely conserved between subjects, it is natural to leverage this similarity in order to improve statistical efficiency, by addressing both the robustness of inferred graphical structure and reducing small sample bias \citep{Mechelli}.
The statistical chellenge of estimating multiple related graphical models has recently received much attention:
For GGMs, \cite{Danaher} and others exploited $L_1$ penalties, such as the fused graphical LASSO, to couple together inference for multiple related subjects. 
Such penalised likelihood methods are computationally tractable and scale well to high dimensions. 
These studies demonstrate that it is possible to increase statistical efficiency, often considerably, by formulating an appropriate joint model that couples together multiple graphs.
Likewise, the methodology improves robustness by requiring that graphical structure is approximately invariant to perturbations of the data that are, in effect, provided by the subjects themselves.

Whilst useful in many applications, GGMs are undirected graphs and hence cannot not represent the direction of information flow between neural regions.
More fundamentally, GGMs do permit causal inference that is typically the scientific objective \citep{Valdes}.
For this reason we focus attention on graphical models, such as BNs, that are based on DAGs and have an associated theory of inferred causation \citep{Pearl}.
Research focussing on DAGs in this setting includes \cite{Ramsey}, who constructed a hierarchical model in which graph structure was conserved between subjects but the parameters that describe the data-generating process were subject-specific.
\cite{Waldorp} went further by permitting subject-specific graph structure and parameters in the context of Gaussian ancestral graphs whose parameters are constrained by a hierarchical model.
This latter work is closest in spirit to the methodology that we discuss below, but we do not restrict attention to either stationary data or Gaussian data, rendering our approach considerably more flexible.
 
Until very recently, estimation of more general DAGs required either the strong assumption that an ordering of the variables $1,\dots,P$ applied equally to all subjects \citep{Oyen}, or the use of expensive computational approximations such as Markov chain Monte Carlo that scale extremely poorly as either the number $P$ of variables or number $K$ of subjects grows \citep{Werhli}.
An exact algorithm that facilitates the joint estimation of multiple DAGs was recently developed in the sister paper \cite{Oates6}, viewing the estimation problem within a hierarchical Bayesian framework (somewhat similar to a random effects model for the graph structure) and applying advanced techniques from integer linear programming to obtain a {\it maximum a posteriori} estimate of all DAGs simultaneously.
The availability of exact algorithms offers the opportunity to analyse multi-subject neural connectivity using causal DAG models, whilst leveraging the similarity between subjects in order to improve statistical efficiency and robustness.
This letter illustrates the scope and applicability of these exact algorithms within neuroscience, using a small functional magnetic resonance imaging (fMRI) time course dataset obtained on six subjects, coupled with multiregression dynamical models \citep[MDMs;][]{Queen} that permit statistically rigorous causal inference \citep{Queen2}.
It is envisaged that exact algorithms will play an important r$\hat{\text{o}}$le in future studies of neural connectivity and this letter serves to illustrate their application by example.

\begin{figure}[t!]
\centering
\begin{subfigure}[b]{0.7\textwidth}
\includegraphics[width = \textwidth,clip,trim = 1cm 0cm 1cm 0cm]{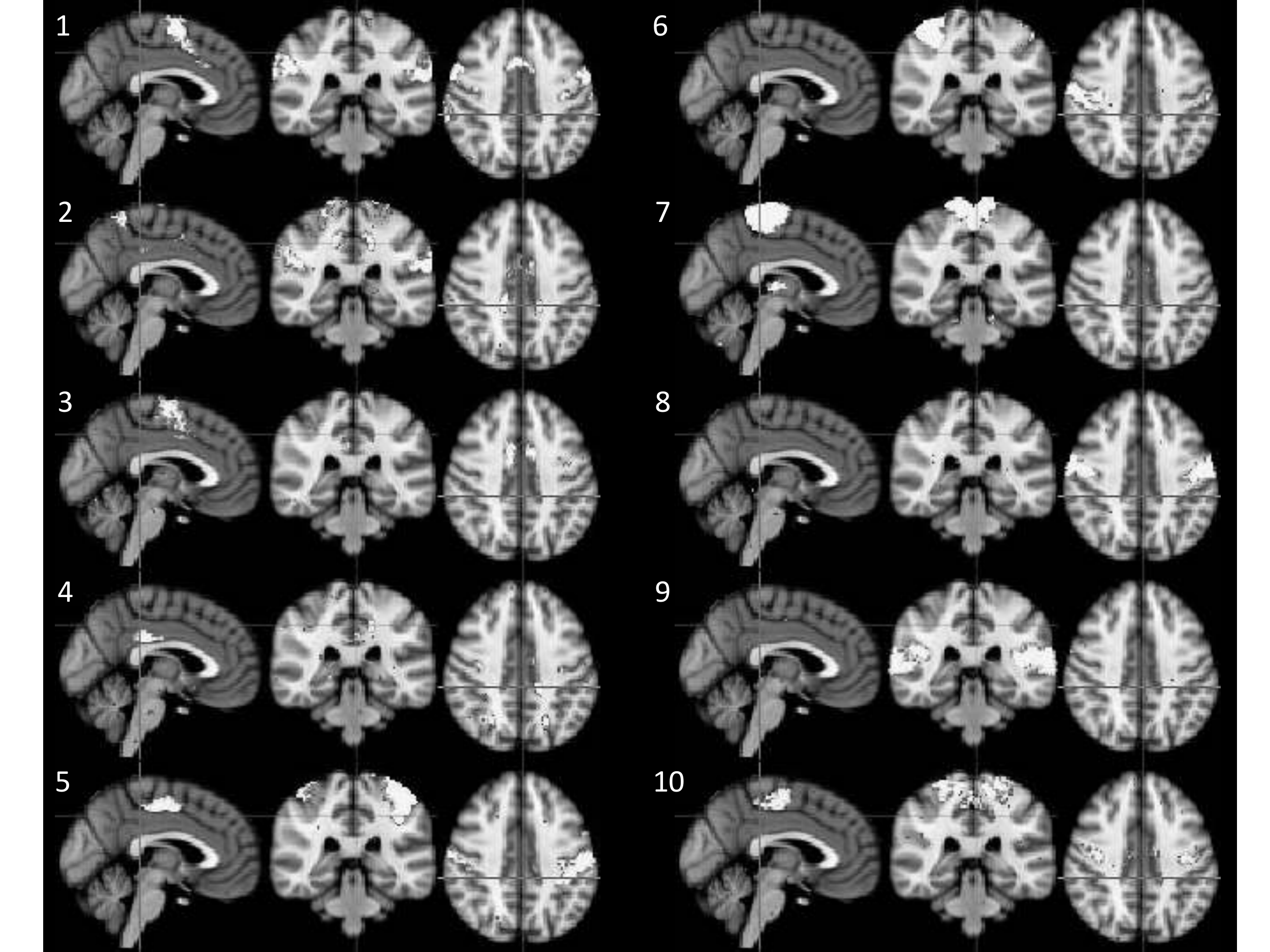}
\caption{Estimated ICA components}
\label{ICA modes}
\end{subfigure}

\begin{subfigure}[b]{0.55\textwidth} \label{regions}
\resizebox{\textwidth}{!}{
\begin{tabular}{|ccc|}
\hline
\underline{Node Number} & \underline{Symmetry} & \underline{Summary} \\
1 &         Bilateral           &          Motor:hand/face \\
2  &        Bilateral            &         Sensory:All-but-face \\
3   &       Bilateral             &        Motor:All-but-face \\
4    &      Bilateral              &       UNKNOWN \\
5    &      Left Dominant      &    Sensorimotor: L Hand+Arms \\
6    &      Right Dominant    &    Sensorimotor: R Hand+Arms \\
7    &      Bilateral              &       Sensory: Trunk-to-feet \\
8    &      Bilateral              &       Sensory: Face \\
9    &      Bilateral              &       Auditory \\
10  &     Bilateral               &      Sensorimotor:All-but-face - Sensory:Face \\ \hline
\end{tabular}}
\caption{Functionality of neural regions}
\label{tab}
\end{subfigure}
\begin{subfigure}[b]{0.225\textwidth}
\includegraphics[width = \textwidth,clip,trim = 4cm 18.75cm 12.5cm 4cm]{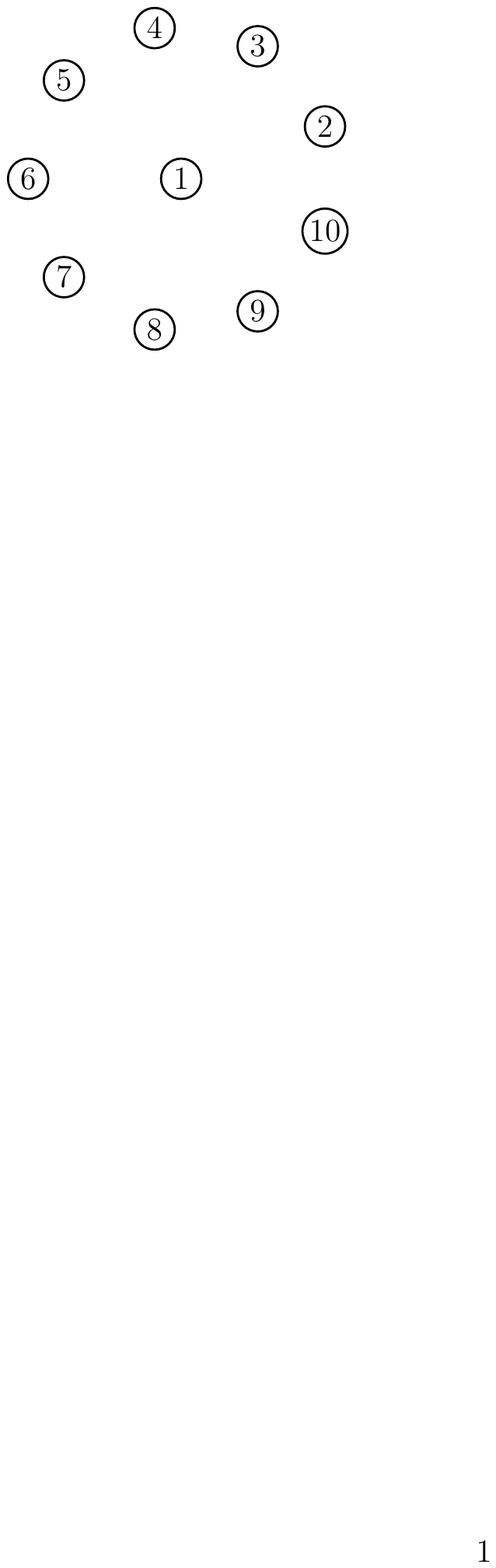}
\caption{Vertex layout}
\label{keys}
\end{subfigure}
\caption{Illustrative resting-state fMRI dataset. We consider 10 spatial modes obtained using ICA, show in (a) and described in (b), each having a corresponding time series for each subject. All graphs that we present will adopt the vertex layout shown in (c).
[A colour version of this figure is available in the online supplemental material.]}
\end{figure}

\section{Results}

\subsection{fMRI data and experimental setup}

Exact algorithms are illustrated here with a small fMRI dataset consisting of six subjects from the Human Connectome Project \citep{VanEssen}.  Scans were acquired on each subject while they were in a state of quiet repose; data from one 15 minute session were used, with a spatial resolution of $2 \times 2 \times 2$ mm and a temporal resolution of 0.7 secs; see \cite{Smith} for full details.  After correcting for head motion, all data was registered to a common reference atlas space and 100-dimensional independent component analysis (ICA) was conducted on the temporally concatenated data. The result of this ICA was 100 spatial modes (common to all subjects) and 100 corresponding temporal modes (subject-specific); at this high dimension, the 100 spatial modes are sparse and spatially compact (though possibly bilaterally symmetric) and so essentially provide a data-driven parcellation of the brain.  
Hierarchical clustering was used on the ICA temporal modes following \cite{Beckmann}, and the 10-mode cluster corresponding to motor cortex was selected for study here.  
Thus our data consists of 10 nodes, with a time series for each node for each subject.  
Figure \ref{ICA modes} displays the neural regions that we consider and Figure \ref{tab} shows the approximate description of each region; note that region 4 was spatially diffuse and difficult to characterise, and thus is likely to be an artefactual component.

The goal here is to understand neural information transfer at resting state and establish subject-specific connectivity.
By its very nature, estimation of resting state connectivity is challenging due to limited information content in the fMRI time series.
Indeed, \cite{Smith2} reported that whilst the presence or absence of connections can sometimes be estimated from fMRI time series data, estimating the direction of edges from data remains extremely challenging.
The integration of data from multiple related subjects offers one route to increased statistical power and this is the approach that we pursue here.

\subsection{MDMs for fMRI time series data}

Following data preprocessing we are left with a collection of random variables $Y_i^{(k)}(n)$ representing the observed activity in subject $k$ at region $i$ and time point $n$.
Following recent research by \cite{Costa} into causal inference based on such fMRI time course data, we model the $Y_i^{(k)}(n)$ as arising from a causal MDM.
Specifically, an MDM is defined on a multivariate time series $\bm{Y}^{(k)}$ is characterised by a contemporaneous DAG $G^{(k)}$, with information shared across time through evolution of the model parameters $\bm{\theta}_i^{(k)}(n)$.
We consider the case where $\bm{Y}^{(k)}(n) | \bm{\theta}^{(k)}(n)$ satisfies linear Gaussian structural equations, though any formulation would be compatible with the methodology that we present.
Write $G_i^{(k)} \subseteq \{1,\dots,P\}\setminus\{i\}$ for the parents of vertex $i$ in the DAG $G^{(k)}$ and write $\bm{Y}_S^{(k)}$ for the collection of univariate time series corresponding to the variables $\{Y_i^{(k)} : i \in S\}$.
This MDM is described by the following observation equations
\begin{eqnarray}
Y_i^{(k)}(n) = \mathbf{Y}_{G_i^{(k)}}^{(k)}(n)^T \boldsymbol{\theta}_i^{(k)}(n) + \epsilon_i^{(k)}(n)    \label{mdmeq1} 
\end{eqnarray}
\noindent where $\epsilon_i^{(k)}(n) \sim N(0,V_ i^{(k)}(n))$, together with the system equations
\begin{eqnarray}
\boldsymbol{\theta}^{(k)}(n) = \bm{\Gamma}^{(k)}(n) \boldsymbol{\theta}^{(k)} (n-1) + \mathbf{w}^{(k)}(n)  \label{mdmeq2}
\end{eqnarray}
where $\bm{\Gamma}^{(k)}(n)$ is a matrix of autoregressive coefficients and $\mathbf{w}^{(k)}(n)\sim N (\mathbf{0},\mathbf{W}^{(k)}(n))$.
Default choices for $V_i^{(k)}(n)$, $\bm{\Gamma}^{(k)}(n)$, $\bm{W}^{(k)}(n)$ were assumed following \cite{Costa}.
Model selection for MDMs is based on Bayes factors \citep[see e.g.][]{West}. 
The evidence in favour of the DAG $G^{(k)}$ under the MDM likelihood can be calculated as
\begin{eqnarray}                            
p(\bm{Y}^{(k)}|G^{(k)}) = \prod_{i=1}^{P} \prod_{n=1}^N p(Y_i^{(k)}(n) | \mathbf{Y}_{G_i^{(k)}}^{(k)}(n), \bm{Y}^{(k)}(1:n-1), G_i^{(k)}). \label{mdm evidence}
\end{eqnarray} 
In practice Eqn. \ref{mdm evidence} is evaluated using simple Kalman filter recurrences and we refer the reader to \cite{Costa} for further details.
\cite{Costa2} reports that the MDMs above are well-suited to the analysis of resting-state fMRI data, outperforming the methods surveyed by \cite{Smith2} in both the detection of edges and also the orientation of edges.
This promising performance appears to be driven by the information present in temporal spike patterns, as exploited directly in recent work by \cite{Diekman}.

The MDMs here are reified with the interpretation that edges correspond to neural connectivity \citep{Dawid2}.
Independent estimation for the subject-specific DAGs $G^{(k)}$ based on the MDM score (Eqn. \ref{mdm evidence}) yields graphs that display high between-subject variability (Fig. \ref{independent_neuro}).
Thus the causal semantics that are associated with MDMs imply that neural connectivity is highly variable between subjects.
This is unreasonable on neuroscientific grounds and likely reflects the lack-of-robustness and small sample bias that are often associated with graphical analyses. 
This motivates a hierarchical statistical model and exact estimation, as we describe below.

\begin{figure}
\centering
\begin{subfigure}[b]{0.8\textwidth}
\includegraphics[width = \textwidth,clip,trim = 3.6cm 17.8cm 16.5cm 2.5cm]{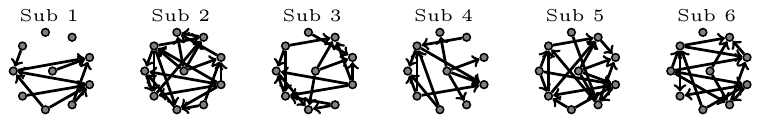}
\caption{}
\label{independent_neuro}
\end{subfigure}
\begin{subfigure}[b]{0.8\textwidth}
\includegraphics[width = \textwidth,clip,trim = 3.6cm 17.8cm 16.5cm 2.5cm]{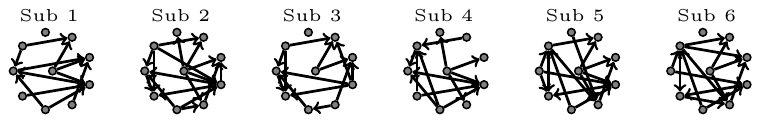}
\caption{}
\label{joint_neuro}
\end{subfigure}
\caption{Illustrative fMRI dataset; here time course data were obtained on six subjects.
The subject-specific connectivity between ten distinct regions of the brain was estimated using multiregression dynamical models applied (a) to each subject separately, (b) to all subjects jointly with regularity hyperparameter $\lambda = 4$. 
The graphs in (b) are 23\% more similar compared to the graphs in (a), as explained in the main text. 
[Figure \ref{keys} provides a key.]}
\end{figure}

\subsection{A hierarchical model for multi-subject analysis}

Following unsatisfactory independent estimation, we now proceed to explore exact joint estimation as enabled by the recent methodological advances of \cite{Oates6}.
We note that, providing that the quantities used to compute Eqn. \ref{mdm evidence} above have been cached, the joint analysis below does not require any further computation involving the MDM model equations.
Write $\mathcal{G}$ for the collection of all DAGs on the vertices $1,\dots,P$ and write $G^{(1:K)} \in \mathcal{G}^K$ for the collection of all the DAGs $G^{(1)}, \dots, G^{(K)}$.
Joint estimation proceeds within a hierarchical Bayesian framework that is specified by the ``multiple DAG prior''
\begin{eqnarray}
p(G^{(1:K)}|A) \propto \left( \prod_{(k,l) \in A} r(G^{(k)},G^{(l)}) \right) \times \left( \prod_{k = 1}^K m(G^{(k)}) \right). \label{joint prior}
\end{eqnarray}
The functions $r$ and $m$ are defined below.
Here $A$ denotes an undirected network on the indices $1,\dots,K$ that will be used to encode a similarity structure between subjects; the first product factorises along the edges of $A$.
When $A$ is complete, Eqn. \ref{joint prior} encodes an exchangeability assumption that any DAG $G^{(k)}$ is equally likely {\it a priori} to be similar to any other DAG $G^{(l)}$ ($k \neq l$).
Such an exchangeability assumption is implicit in much of the recent literature on multiple graphical models \citep{Werhli,Oyen,Danaher}.
However, exchangeability will be inappropriate when the collection of subjects is heterogeneous, for instance containing groups or subgroups that correspond to differential neural connectivities.
The methodology that we present below allows for arbitrary (and even uncertain) $A$, relaxing this exchangeability assumption and permitting more flexible estimation.

The function $r : \mathcal{G} \times \mathcal{G} \rightarrow [0,\infty)$ is used to encode regularity between pairs of DAGs, with larger values corresponding to {\it a priori} more similar DAG structures.
\cite{Oates6} showed that a particularly convenient form of regularity function is obtained by considering hyper-Markov properties \citep{Dawid}:
\begin{eqnarray}
\log(r(G^{(k)},G^{(l)})) = - \sum_{i=1}^P \sum_{j=1}^P \lambda_{j,i}^{(k,l)} [ (j \in G_i^{(k)}) \oplus (j \in G_i^{(l)}) ]. \label{hm}
\end{eqnarray}
Here $\oplus$ is the logical XOR operator and $[E]$ is used to denote an indicator function for the event $E$.
The constants $\lambda_{j,i}^{(k,l)}$ can be used to encode which aspects of structure are more likely to be conserved across subjects, based on subjective prior information, or indeed to encode which subjects are more likely to share similar connectivity, based on ancillary covariates such as age, gender, disease status etc.
For example one could exploit a penalty $\lambda_{j,i}^{(k,l)} = \lambda^{-|\text{age}(k) - \text{age}(l)|}$ for some $\lambda > 0$ that encourages sharing of graph structure among subjects of similar ages, but treats edges $(j,i)$ as exchangeable.
The function $m(G)$ in the multiple graphical model prior (Eqn. \ref{joint prior}) provides an adjustment for the fact that the size of the space $\mathcal{G}$ grows super-exponentially with the number $P$ of vertices \citep{Consonni}.
In this paper we follow \cite{Chen,Scott,Foygel} and control multiplicity using the binomial correction
\begin{eqnarray}
m(G) = \prod_{i = 1}^P \binom{P}{|G_i|}^{-1} [ |G_i| \leq d_{\max} ].
\end{eqnarray}
Here $d_{\max}$ is a fixed upper bound on the in-degree of vertices in $G$ that encodes prior knowledge on the support of the graphical models \citep[e.g.][]{Hill}.
For all examples in this paper we made the subjective choice $d_{\max} = 3$ that reflects the degree of connectivity observed in previous literature \citep[e.g.][]{Ramsey}.

\FloatBarrier
\subsection{Exact estimation of graphical structure}

Bayesian estimation of graphical structure is based on the {\it maximum a posteriori} (MAP) estimate that is obtained jointly over all DAGs as
\begin{eqnarray}
\hat{G}^{(1:K)}|A := \argmax_{G^{(1:K)} \in \mathcal{G}^K} p(G^{(1:K)}|\bm{Y}^{(1:K)},A). \label{opt1}
\end{eqnarray}

More generally, the network $A$ that expresses similarity between the subjects may be subject to uncertainty.
Write $\mathcal{A}$ for the set of undirected networks on the vertices $1,\dots,K$.
In this setting we impose a hyperprior distribution over $\mathcal{A}$ given by
\begin{eqnarray}
\log(p(A)) \stackrel{+C}{=} \sum_{k=1}^K \sum_{l=k+1}^K \eta^{(k,l)} [ (k,l) \in A ] \label{hyp}
\end{eqnarray}
where $\stackrel{+C}{=}$ denotes equality up to an unspecified additive constant.
Here constants $\eta^{(k,l)}$ can again be used to encode prior similarity between subjects on the basis of ancillary covariates.
The hyperprior distribution in Eqn. \ref{hyp} has the effect of detering sparsity in the network $A$, leading to increased regularisation between DAGs and a more conservative estimate of between-subject variability.
In this extended setting, our focus is now an extended MAP estimator
\begin{eqnarray}
(\hat{G}^{(1:K)},\hat{A}) := \argmax_{\substack{G^{(1:K)} \in \mathcal{G}^K \\ A \in \mathcal{A}}} p(G^{(1:K)},A|\bm{Y}^{(1:K)}) \label{opt2}
\end{eqnarray}
that simultaneously estimates subject-specific DAGs $G^{(k)}$ and the network $A$ that relates subjects.

For each of the optimisation problems in Eqns. \ref{opt1}, \ref{opt2}, \cite{Oates6} describes how techniques from integer linear programming, including constraint propagation and cutting planes, may be used to exactly locate the MAP estimate with a minimal computational burden.
Define the ``local scores''
\begin{eqnarray}
s^{(k)}(i,G_i^{(k)}) := \left\{ \begin{array}{ll} \log(p(\bm{Y}_i^{(k)}|\bm{Y}_{G_i^{(k)}}^{(k)},G_i^{(k)})) - \log \binom{P}{|G_i^{(k)}|} & \text{if }|G_i^{(k)}| \leq d_{\max} \\ -\infty & \text{otherwise} \end{array} \right. \label{scores}
\end{eqnarray}
that are sufficient statistics for structure learning in the MDM model. 
Specifically, $s^{(k)}(i,\pi)$ is equal (up to an additive constant) to the log-posterior probability associated with the parent set configuration $G_i^{(k)} = \pi$ for node $i$ in subject $k$, based only on the subject's own data $\bm{Y}^{(k)}$.
The joint MAP estimators introduced above seek to maximise the sum of these terms subject to the hierarchical penalty (Eqn. \ref{hm}) and the requirement that $G^{(1)}, \dots, G^{(K)}$ are each well-defined DAGs. 

In brief, the computational methodology exploits the fact that Eqns. \ref{opt1}, \ref{opt2} can be encoded as an integer linear program of the form
\begin{eqnarray}
\text{maximise} \; \; \; \bm{f}^T\bm{x} \; \; \; \text{subject to} \; \; \; \bm{A}\bm{x} \leq \bm{b}, \; \; \; \bm{C}\bm{x} = \bm{d}, \; \; \; \bm{x} \geq \bm{0}, \; \; \; x_i \in \mathbb{Z}
\end{eqnarray}
through careful choices of the integer-valued matrices $\bm{A}$, $\bm{C}$ and integer-valued vectors $\bm{b}$, $\bm{d}$.
Specifically, the entries of the vector $\bm{x}$ are taken to be binary indicators corresponding to events that include $[G_i^{(k)} = \pi]$, $[(k,l) \in A]$ and $[ (j \in G_i^{(k)}) \oplus (j \in G_i^{(l)}) ] \cap [(k,l) \in A]$, whilst the vector $\bm{f}$ contains the local scores $s^{(k)}(i,\pi)$ and the constants $\lambda_{j,i}^{(k,l)}$ and $\eta^{(k,l)}$.
By inspection of Eqns. \ref{mdm evidence}, \ref{hm} and \ref{hyp} we see that the posterior log-probability
\begin{eqnarray}
\log(p(G^{(1:K)},A|\bm{Y}^{(1:K)})) & \stackrel{+C}{=} & \sum_{k=1}^K \sum_{i=1}^P \sum_{\pi \subseteq \{1:P\}\setminus\{ i \}} s^{(k)}(i,\pi) [G_i^{(k)} = \pi]  + \sum_{k=1}^K \sum_{l=k+1}^K \eta^{(k,l)}[(k,l) \in A] \nonumber \\
&& - \sum_{k=1}^K \sum_{l=k+1}^K \sum_{i,j=1}^P \lambda_{j,i}^{(k,l)} [ (j \in G_i^{(k)}) \oplus (j \in G_i^{(l)}) ] \cap [(k,l) \in A] 
\end{eqnarray}
can be written as an inner-product $\bm{f}^T\bm{x}$.
The inequality constraints $\bm{A}\bm{x} \leq \bm{b}$ and equality constraints $\bm{C}\bm{x} = \bm{d}$ are carefully chosen to ensure that the feasible region for $\bm{x}$ consists of precisely those vectors that correspond to well-defined (multiple) DAG models.
This final point is somewhat technical and we refer the reader to \cite{Oates6} for full details.

\FloatBarrier
\subsection{Elicitation of tuning parameters; theory}

The class of statistical models that is amenable to exact inference is substantial, but here we focus on particularly tractable prior specifications that allows us to clearly illustrate the methodology.
Specifically, we reduce the number of hyperparameters to two by making the assumption that all edges are {\it a priori} equally likely to be shared between all pairs of subjects ($\lambda_{j,i}^{(k,l)} = \lambda$ for all $i,j,k,l$) and that all pairs of subjects are {\it a priori} equally likely to share similar graph structure ($\eta^{(k,l)} = \eta$ for all $k,l$).
Prior elicitation in this reduced class of models therefore requires the specification of hyperparameters $\lambda$ and $\eta$.
The impact of the choice of hyperparameters on the MAP estimators is clarified in the following:

\begin{lemma} \label{mix1}
(a) When $\eta = 0$, $\hat{G}^{(1:K)}$ consists of DAGs equal to those computed using independent estimation.
(b) For $\eta > 0$ we have $(k,l) \notin \hat{A} \implies \hat{G}^{(k)} \neq \hat{G}^{(l)}$.
(c) For fixed $\eta$ there exists $\lambda^* \in [0,\infty)$ such that whenever $\lambda > \lambda^*$ we have $(k,l) \in \hat{A} \implies \hat{G}^{(k)} = \hat{G}^{(l)}$. \label{mixture}
(d) There exists $\eta^* \in [0,\infty)$ such that $\hat{A}$ is the complete network whenever $\eta > \eta^*$.
\end{lemma}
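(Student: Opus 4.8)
The plan is to put the simplified objective into a separated form and then read off all four claims from the sign of a single per-edge coefficient. Under the reduced prior $\lambda_{j,i}^{(k,l)}=\lambda\ge0$ and $\eta^{(k,l)}=\eta$, the posterior log-probability displayed before Eqn.~\ref{opt2} collapses, after writing $S(G^{(1:K)}):=\sum_{k,i,\pi}s^{(k)}(i,\pi)[G_i^{(k)}=\pi]$ for the total independent score and $d(G^{(k)},G^{(l)}):=\sum_{i,j}[(j\in G_i^{(k)})\oplus(j\in G_i^{(l)})]$ for the edit distance between two DAGs, into
\[
\log p(G^{(1:K)},A\mid\bm{Y}^{(1:K)}) \stackrel{+C}{=} S(G^{(1:K)}) + \sum_{k<l}[(k,l)\in A]\bigl(\eta-\lambda\,d(G^{(k)},G^{(l)})\bigr).
\]
Every assertion then follows from the elementary coordinate-optimality observation that at a joint maximiser of Eqn.~\ref{opt2} one may hold either argument fixed and the other must remain optimal, together with the sign of the coefficient $\eta-\lambda\,d(G^{(k)},G^{(l)})$.

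For (a) I would set $\eta=0$, so each coefficient equals $-\lambda\,d(G^{(k)},G^{(l)})\le0$; hence for any fixed DAGs the empty network maximises the edge sum and drives it to zero, leaving $S(G^{(1:K)})$, which decouples across $k$. The joint maximiser over $G^{(1:K)}$ therefore agrees with the $K$ separate independent maximisers. For (b) I would argue by contraposition: if $\hat{G}^{(k)}=\hat{G}^{(l)}$ then $d=0$ and the coefficient of edge $(k,l)$ is exactly $\eta>0$, so were $(k,l)\notin\hat{A}$, inserting it (a feasible move since $\mathcal{A}$ contains every network) would strictly raise the objective, contradicting optimality; thus $(k,l)\notin\hat{A}$ forces $\hat{G}^{(k)}\neq\hat{G}^{(l)}$.

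For (c) the plan is to take $\lambda^*=\eta$. If some $(k,l)\in\hat{A}$ had $\hat{G}^{(k)}\neq\hat{G}^{(l)}$ then $d\ge1$, so for $\lambda>\lambda^*$ the coefficient is at most $\eta-\lambda<0$ and deleting that edge strictly improves the objective, a contradiction; hence $(k,l)\in\hat{A}$ implies $\hat{G}^{(k)}=\hat{G}^{(l)}$. For (d) I would first bound $d(G^{(k)},G^{(l)})$ above by a finite constant $D_{\max}$ depending only on $P$ (indeed $d\le P(P-1)$, and $\le 2Pd_{\max}$ under the in-degree cap), so that every coefficient is at least $\eta-\lambda D_{\max}$. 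Choosing $\eta^*=\lambda D_{\max}$ then makes all coefficients strictly positive whenever $\eta>\eta^*$, so the complete network uniquely maximises the edge sum for any fixed DAGs, forcing $\hat{A}$ to be complete.

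Each step is short once the objective is in the separated form above, and the per-edge sign analysis does the real work. I expect the only point requiring genuine care to be the joint nature of the optimisation in Eqn.~\ref{opt2}: one must justify that a single edge of $A$ may be inserted or deleted while $\hat{G}^{(1:K)}$ is held fixed and that the result stays feasible, so that coordinate optimality legitimately yields the strict improvements invoked in (b)--(d). A secondary, minor obstacle is exhibiting the explicit finite bound $D_{\max}$ on the edit distance in (d) so that $\eta^*$ is well defined; this is immediate from the fact that there are only $P(P-1)$ possible ordered edges.
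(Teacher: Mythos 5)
Your proof is correct, and for parts (a) and (b) it is essentially the paper's own argument (separated objective, then coordinate optimality on a single edge of $A$). For (c) and (d), however, you take a genuinely different route. The paper picks a single \emph{data-dependent} constant
\[
\lambda^* \;=\; \eta^* \;=\; \sum_{k=1}^K \sum_{i=1}^P \left[\max_{\pi} s^{(k)}(i,\pi)\right] - \left[\min_{\pi} s^{(k)}(i,\pi)\right],
\]
the total dynamic range of the local scores, and argues by comparing the putative optimum against a \emph{globally} modified configuration in which the offending DAGs are forced to coincide: for (c), a disagreement across an edge of $A$ costs at least $\lambda$ in penalty, whereas equalising the DAGs sacrifices at most $\lambda^*$ of likelihood; for (d), a missing edge of $A$ forfeits $\eta$, whereas completing $A$ and equalising all DAGs sacrifices at most $\eta^*$. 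You instead hold the DAGs fixed and move only the single edge of $\hat{A}$ --- deleting it for (c) with threshold $\lambda^* = \eta$, inserting it for (d) with threshold $\eta^* = \lambda P(P-1)$ --- which is a valid proof of the statement as written, since (c) and (d) concern the joint estimator of Eqn.~\ref{opt2} where such moves are feasible (a point you rightly flag as the one needing care). The trade-off is worth noting: your thresholds are data-free, depending only on the hyperparameters and $P$, and your local moves are more elementary; the paper's thresholds are instead uniform in the hyperparameters ($\lambda^*$ does not depend on $\eta$, $\eta^*$ does not depend on $\lambda$), and its argument for (c) never alters $A$, so it also covers the fixed-$A$ estimator of Eqn.~\ref{opt1}. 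That uniformity is not cosmetic: the paper's proof of Property~\ref{nonmono1}(b) fixes $\lambda > \lambda^*$ once and then varies $\eta$ over $[0,\infty)$, which your $\eta$-dependent threshold $\lambda^* = \eta$ could not support. So your argument fully establishes Property~\ref{mix1}, but the paper's stronger, hyperparameter-uniform version is what gets reused downstream.
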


The above result deals with the extremes of the parameter space; intuitively we would expect non-trivial values of $(\lambda,\eta)$ to interpolate ``smoothly'' between these extremes.
The following shows that this intuition is not strictly true.
Specifically, as $\lambda$ is monotonically increased, it is possible for a particular edge to enter and exit the MAP estimator $\hat{G}^{(k)}$ multiple times and furthermore, non-monotonicity is also exhibited by the network estimator $\hat{A}$:

\begin{lemma} \label{nonmono1}
(a) Fix a network $A \in \mathcal{A}$ and consider varying the hyperparameter $\lambda$. If $A$ is non-empty, then there exist values of the sufficient statistics $s^{(k)}(i,\pi)$ such that $[ j \in \hat{G}_i^{(k)} ]$ is not monotonic in $\lambda$. 
(b) Fix the hyperparameter $\lambda$, and consider unknown $A$ with hyperparameter $\eta$. Then there exist values of the sufficient statistics $s^{(k)}(i,\pi)$ such that $[ (j,i) \in \hat{A} ]$ is not monotonic in $\eta$. 
\end{lemma}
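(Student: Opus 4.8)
Both statements are existence claims, so the plan is to write down explicit sufficient statistics and read off how the MAP optimiser depends on the relevant hyperparameter. The unifying device is that, for fixed scores, every admissible configuration contributes to the objective an \emph{affine} function of that hyperparameter: the slope counts the penalised disagreements (in (a)) or the links $|A|$ (in (b)), and the intercept is the remaining score. Hence the optimiser is piecewise constant and traces the upper envelope of finitely many such lines, along which an aggregate quantity (the number of penalised disagreements in (a), or $|A|$ in (b)) moves monotonically. Non-monotonicity of a \emph{single} coordinate is then the assertion that, as the optimiser jumps between the vertices of this envelope, the coordinate switches on and then off (possibly on again) even though the aggregate count moves monotonically; the two constructions below are designed to force exactly this local reversal.

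For part (a) I would take $K=2$ and $A=\{(1,2)\}$, the minimal non-empty case, so that \eqref{opt1} reduces to maximising $S_1(G^{(1)})+S_2(G^{(2)})-\lambda\,d(G^{(1)},G^{(2)})$ with $d$ the structural Hamming distance of \eqref{hm} and $S_k(G)=\sum_i s^{(k)}(i,G_i)$. Pinning subject $2$ to the empty DAG (give its local scores an overwhelming preference for empty parent sets) leaves subject $1$ effectively maximising $S_1(G)-\lambda|E(G)|$. On $P=3$ vertices I would place all the score mass on two mutually \emph{incompatible} pieces: a synergistic in-degree-two parent set at one node (parents $\{2,3\}$ of node $1$, worth a large amount) and a single competing edge that closes a $2$-cycle with it (the edge $1\to 3$, worth an intermediate amount), sending every other local score to $-\infty$. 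As $\lambda$ increases the optimum then moves $\{2\to1,\,3\to1\}\to\{1\to3\}\to\emptyset$, so the indicator $[1\in\hat G^{(1)}_3]$ runs $0\to1\to0$. The hard part—and the main obstacle—is certifying optimality in each range: here acyclicity does the essential work, because the otherwise-dominant configuration that combines both high-scoring pieces contains the $2$-cycle $1\to3\to1$ and is infeasible; without the acyclicity constraint the problem separates over edges and is monotone.

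For the $\eta$-dependence in part (b) I would move to $K=4$ and fix $\lambda$ large—above the threshold of Property~\ref{mix1}(c), uniformly over the bounded range of $\eta$ I will use—so that every link of $\hat A$ forces its endpoints to share a common DAG. In this regime the optimal $\hat A$ is a disjoint union of cliques, i.e.\ a \emph{partition} of the subjects, and \eqref{opt2} collapses to maximising $\sum_{C}\big[M(C)+\eta\binom{|C|}{2}\big]$ over partitions, where $M(C)=\max_G\sum_{k\in C}S_k(G)$ is the best shared-DAG score of a group $C$. A link $(k,l)$ lies in $\hat A$ iff $k$ and $l$ share a group, and as $\eta$ grows the optimal partition coarsens through increasing values of $\sum_C\binom{|C|}{2}$.

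Writing $\mu_{kl}=M(\{k\})+M(\{l\})-M(\{k,l\})\ge 0$ for the \emph{merge cost} of a pair, the objective (relative to the all-singletons baseline) charges each group its reconciliation cost and credits it $\eta$ per internal pair. The effect I want is the classical fact that the cheapest single edge need not belong to the cheapest perfect matching: I would choose scores so that $\{1,2\}$ is the strictly cheapest pair, yet $\{3,4\}$ is so expensive that the cheapest two-pair partition is $\{1,3\}\,|\,\{2,4\}$, which omits $\{1,2\}$. Then, with the higher-order merge costs set large enough that no triple group intervenes, the optimal partition runs singletons $\to\{1,2\}\,|\,3\,|\,4\to\{1,3\}\,|\,\{2,4\}\to$ complete as $\eta$ increases, and $[(1,2)\in\hat A]$ runs $0\to1\to0\to1$. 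The main obstacle is realisability of the required merge costs: to let the optimal matching exclude the cheapest edge they must violate the triangle inequality among subjects, which is precisely what forces $K\ge 4$. I would obtain this freedom by giving each pair of subjects a private edge-slot—placed at its own node of a sufficiently large DAG so that acyclicity never couples distinct slots—on which only that pair disagrees, with tunable intensity, so that the $\mu_{kl}$ (and, via extra slots, the triple and quadruple costs) can be set to essentially arbitrary non-negative values. Checking that these local scores reproduce the intended $M(C)$ and that exactly the four named partitions form the envelope is then a routine finite computation.
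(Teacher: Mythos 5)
Your overall architecture is sound and genuinely different from the paper's, which works with tiny, directly enumerated examples: for (a) the paper takes $P=2$, $K=2$, where subject 1 weakly prefers $1\to2$ (score $1$), subject 2 strongly prefers $2\to1$ (score $4$) on which subject 1 scores $-3$; the optimum passes from mutual disagreement, to subject 1 dropping to the empty graph (halving the penalty), to both agreeing on $1\to2$, so the tracked edge runs $1\to0\to1$. Your (a) replaces this inter-subject competition by intra-subject competition: subject 2 is pinned to the empty DAG and acyclicity (the $2$-cycle $1\to3\to1$) forbids combining the two high-scoring fragments, so the feasible values $a-2\lambda$, $b-\lambda$, $0$ trace an envelope whose middle piece alone contains the tracked edge. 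This is correct provided (i) the empty parent sets keep finite (zero) score --- read literally, ``every other local score $-\infty$'' kills them and leaves no feasible DAG at all, contradicting your own claimed trajectory; (ii) $b<a<2b$, so the middle line actually appears on the envelope; and (iii) $P\geq3$, a small loss of generality relative to the paper's $P=2$ gadget. Your (b) reduction --- $\lambda$ above the Property \ref{mix1}(c) threshold forces $\hat{A}$ to be a disjoint union of cliques over equal-DAG classes, turning Eqn. \ref{opt2} into maximisation of $\sum_C [M(C)+\eta\binom{|C|}{2}]$ over partitions --- is exactly right, and is in fact the implicit frame of the paper's own example.

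The one step of your (b) that would fail as written is the choice of merge costs. Under your own realisation (a private slot per pair of subjects), the costs are additive, $\mu_C=\sum_{\{k,l\}\subseteq C}c_{kl}$, so the ``higher-order merge costs'' are not free parameters. If only $c_{34}$ is made large, the triples $\{1,2,3\}$ and $\{1,2,4\}$ contain no expensive pair, and their objective lines (slope $3$ in $\eta$) overtake $\{1,3\}\,|\,\{2,4\}$ (slope $2$) before the latter ever reaches the envelope, destroying the intended trajectory. The repair stays inside your scheme: make all three unused pairs expensive, e.g. $c_{12}=1$, $c_{13}=c_{24}=2$, $c_{14}=c_{23}=c_{34}=100$; then every triple contains a cost-$100$ pair, and the envelope is singletons, then $\{1,2\}$ (from $\eta=1$), then $\{1,3\}\,|\,\{2,4\}$ (from $\eta=3$), then complete (from $\eta=301/4$), so $[(1,2)\in\hat{A}]$ runs $0\to1\to0\to1$. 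It is worth knowing, when you compare with the paper, that the paper's own verification of part (b) contains an arithmetic slip: with its scores the triangle configuration is worth $7+3\eta$ and the complete network $5+6\eta$, which cross at $\eta=2/3$ (not $3/2$), while the complete network already beats the two-clique configuration $8+2\eta$ for $\eta>3/4$; hence the claimed middle regime is never optimal and, for the paper's numbers, $[(1,2)\in\hat{A}]$ is actually monotone. Your envelope formulation, with the corrected costs above, yields a proof that withstands exactly this kind of check.
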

\noindent Thus the joint MAP, like other penalised likelihood approaches \citep[including the GLASSO for GGMs;][]{Friedman2} does not obey a monotonicity property.
Property \ref{nonmono1} makes it surprising that exact algorithms exists in this nontrivial setting.
In practice and in results below we have found that, like the GLASSO, monotonicity holds approximately.

\FloatBarrier

\begin{figure}[t!]
\centering
\begin{subfigure}[b]{0.9\textwidth}
\includegraphics[width = \textwidth,clip,trim = 3cm 13cm 12.8cm 2.5cm]{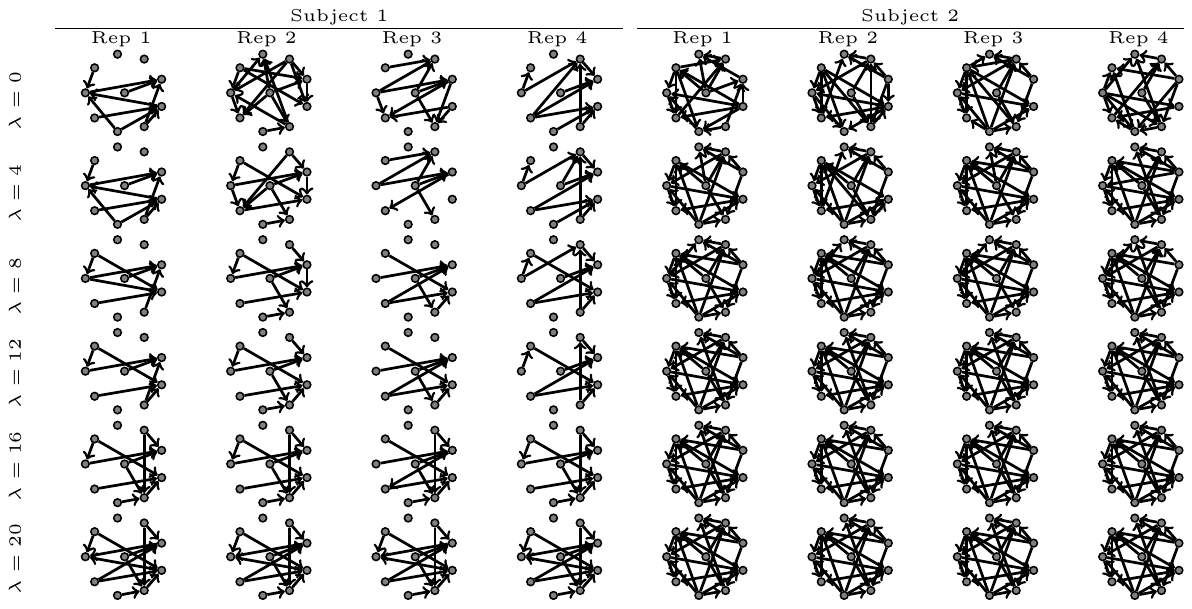}
\caption{}
\label{learn lambda}
\end{subfigure}
\begin{subfigure}[b]{0.49\textwidth}
\includegraphics[width = \textwidth,clip,trim = 3cm 8.6cm 4cm 9cm]{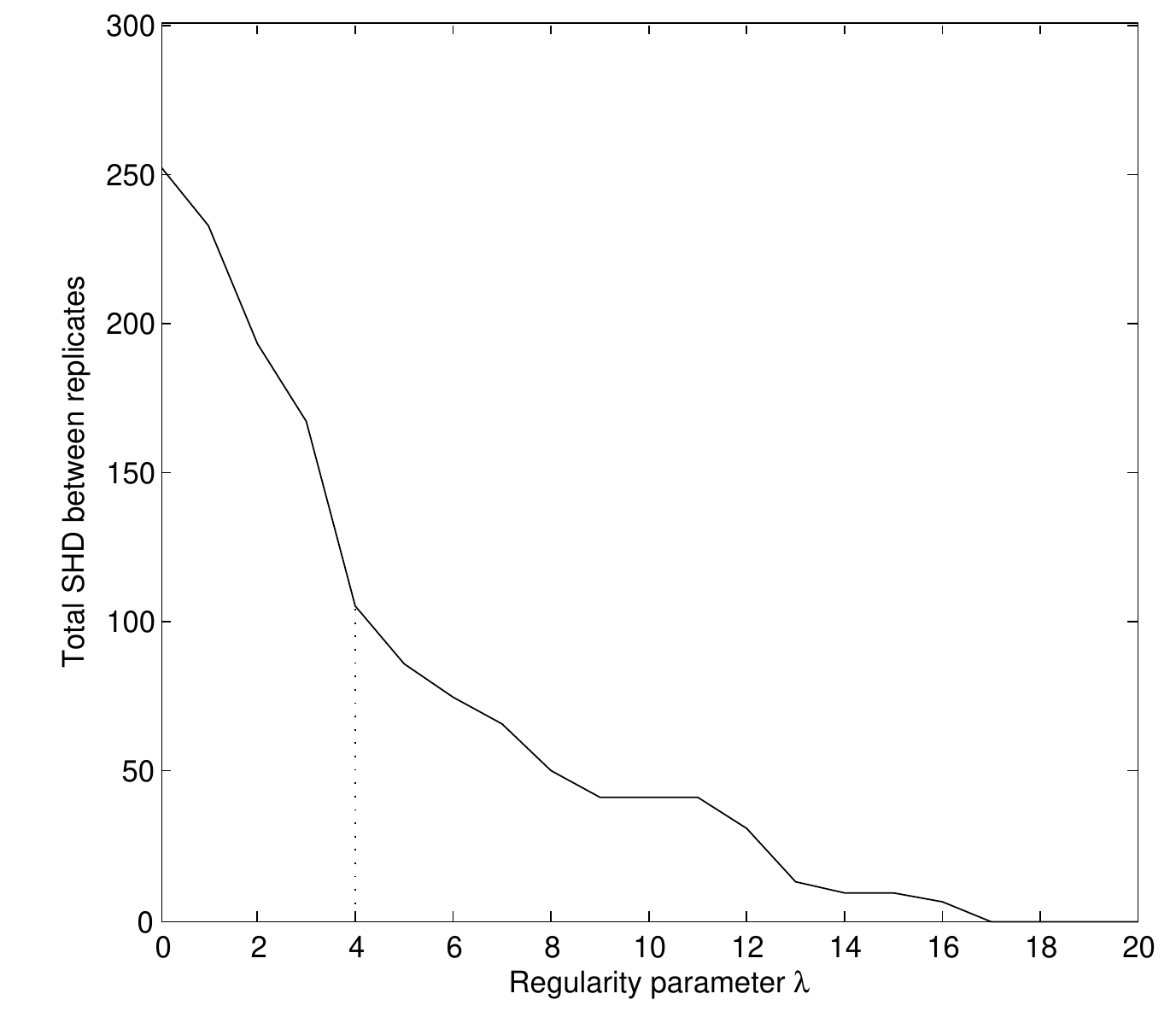}
\caption{}
\label{shda}
\end{subfigure}
\begin{subfigure}[b]{0.49\textwidth}
\includegraphics[width = \textwidth]{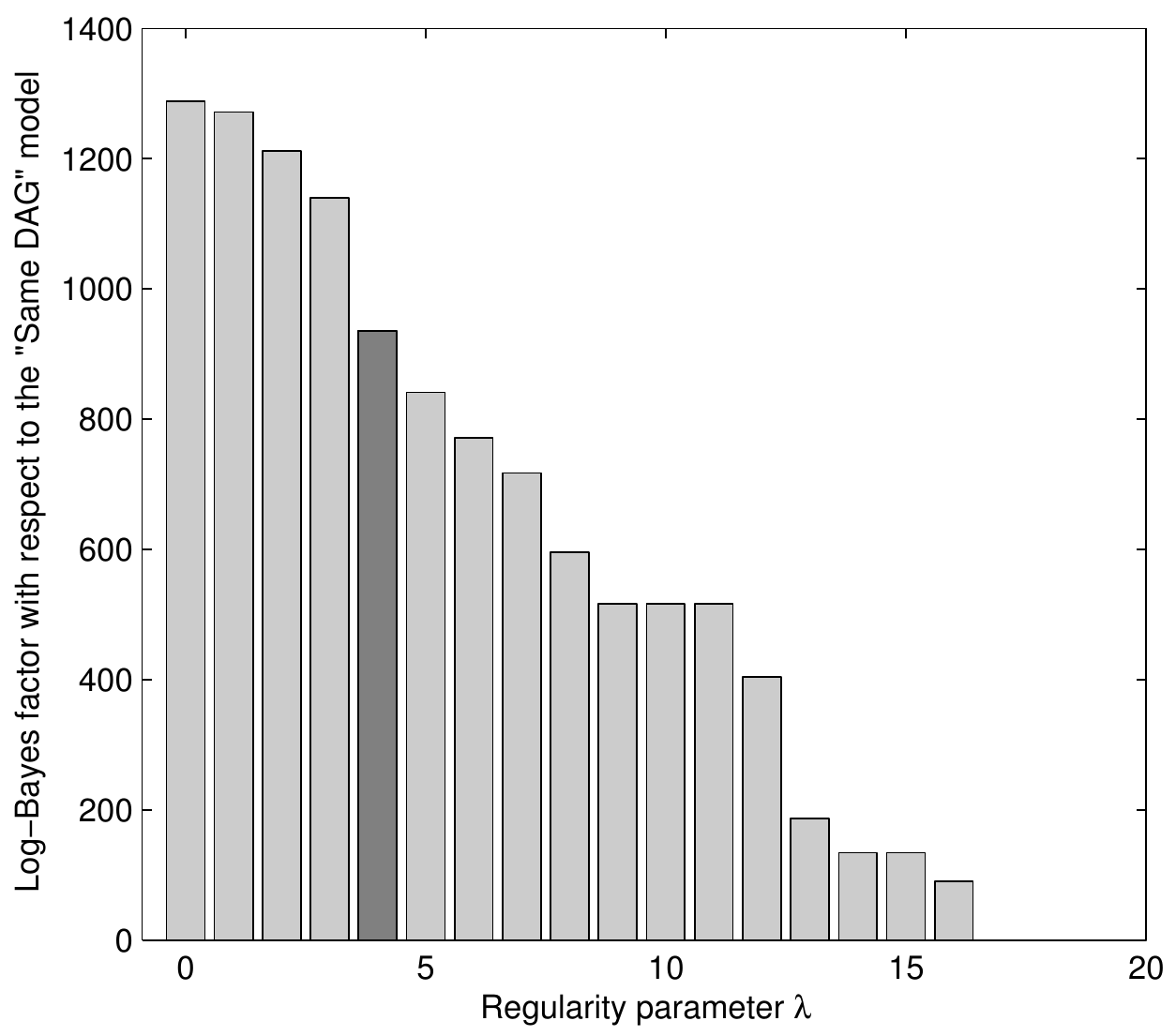} 
\caption{}
\label{BFs}
\end{subfigure}
\caption{Illustrative fMRI dataset; diagnostics for eliciting the regularity parameter $\lambda$ based on technical replicate data and retrospective inspection of the posterior. 
(a) Here two subjects each provided four technical replicate datasets. The DAGs shown are joint MAP estimates for varying $\lambda$, such that replicates were assumed to be exchangeable but subjects were treated independently.
As $\lambda$ is increased the DAGs corresponding to technical replicates become increasingly similar until they are identical at $\lambda \geq 17$.
(b) Here we plot the total SHD between DAGs corresponding to technical replicates against the regularity parameter $\lambda$. The dashed line indicates the value $\lambda = 4$ that reduces the between-subject variability by approximately 50\%.
(c) Comparing the Bayes factor corresponding to model $\lambda = 4$ against independent estimation ($\lambda = 0$) and estimation that forces all DAGs to be identical ($\lambda \geq 17$).}
\end{figure}

\subsection{Elicitation of tuning parameters; practice}

The elicitation of hyperparameters such as $\lambda$, $\eta$ should principally be driven by the scientific context, the nature of the data and the r\^{o}le that inferences are to play in future work.
For example, if the estimated networks are the basis for features within a classification algorithm, then elicitation of hyperparameters should target the classification error.
However in some settings, including our illustrative example, the non-availability of relevant ancillary data (e.g. the class labels in classification) precludes such an objective elicitation.
Below we therefore illustrate diagnostics that could form the basis for subjective elicitation in quite general settings, based on retrospective inspection of the posterior.

The analysis of resting state fMRI data is an emerging area of research \citep{Cole} and currently neither the source nor the extent of subject-specific variation are well-understood.
If the extent of variability at resting state was known, this could be directly leveraged to facilitate the objective elicitation of hyperparameters.
However this is not currently the case and subjective elicitation is required.
The biological knowledge that forms the basis for elicitation is qualitative in nature, as we explain below:
Firstly, connectivity should not change within a subject over the brief time period under which the fMRI experiments were conducted.
Secondly, recent studies \citep[e.g.][]{Ringach} indicate that the notion of ``resting state'' is poorly defined and can correspond to several contrasting neurological activity profiles; we would therefore not expect to obtain identical DAGs under a replication experiment that is unable to control for the precise nature of the resting state. 

A subjective analysis can be obtained using diagnostics based on retrospective examination of the posterior, that we describe below.
Specifically, for our fMRI dataset, we performed exact estimation of the joint MAP based on four technical replicate datasets obtained from the first two subjects under identical laboratory conditions.
To inform elicitation for the regularity parameter $\lambda$, we fixed the network $A$ such that $(k,l) \in A$ if and only if datasets $k$ and $l$ were both technical replicates derived from the same subject (Fig. \ref{learn lambda}). 
This corresponds to placing an exchangeability assumption on the technical replicates, but prohibiting the sharing of information between subjects.
We then computed the total structural Hamming distance \citep[SHD;][]{Tsamardinos} between all pairs of DAGs that are technical replicates (Fig. \ref{shda}).
This diagnostic could be used as the basis for subjective elicitation of $\lambda$ in general situations where replicate data are available.
Below for illustration we focus on one such value, $\lambda = 4$, that attributes approximately 50\% of variability between technical replicates to extrinsic noise resulting from the experimental design. 
Examination of the Bayes factor as a function of $\lambda$ provides a second diagnostic to assist with elicitation that may be useful to highlight over-regularisation.
In this case the value $\lambda = 4$ scores considerably better compared to the alternative that assigns the same DAG to all replicate datasets (log-Bayes factor $\approx 900$, Fig. \ref{BFs}). 
Additional diagnostics for the subjective elicitation of $\eta$ are discussed in the subsequent sections.

\FloatBarrier

\subsection{Learning multiple DAGs with exchangeability}

Based on the elicitation $\lambda = 4$, for illustration, we employed exact estimation for the joint MAP $\hat{G}^{(1:K)}|A$ under the exchangeability assumption that $A$ is the complete network (Eqn. \ref{opt1}).
In order to limit scope, here we simply consider one dataset per subject (i.e. no technical replicates were included), but data aggregation is naturally accommodated in the methodology we present (see discussion).
Results in Figure \ref{joint_neuro} demonstrate that the estimated DAG structures are substantially more similar that our original estimate obtained using independent inference (Fig. \ref{independent_neuro}), with a 23\% decrease in total SHD between DAGs.
This estimate can be expected to more closely represent the true subject-specific neural connectivity patterns, based on the empirical conclusions of \cite{Oates6}.
We note however that validation of this inferred connectivity remains extremely challenging \citep[e.g.][]{Stein}.

\begin{figure}[t]
\centering
\includegraphics[width = \textwidth,clip,trim = 3.3cm 11.5cm 14.5cm 2.5cm]{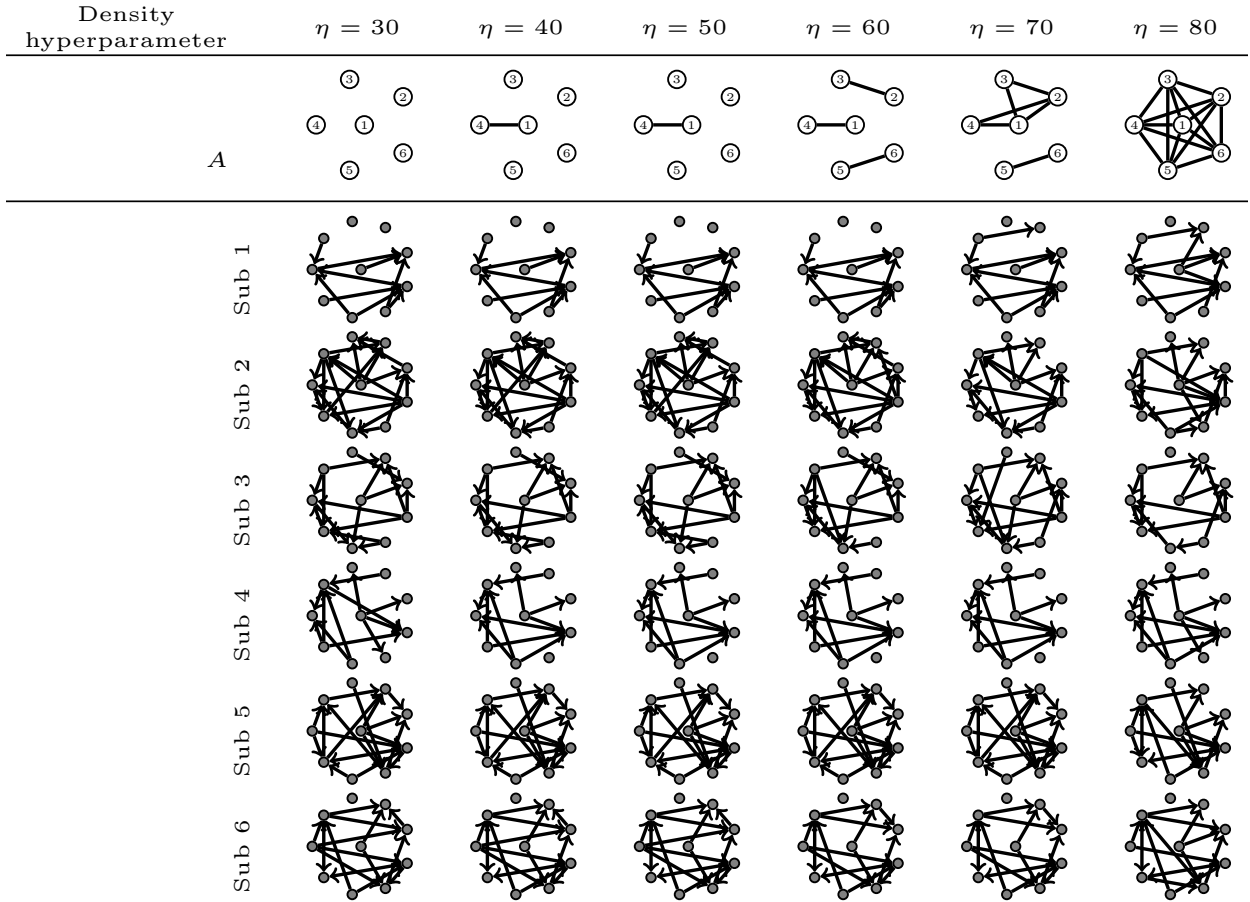}
\caption{Illustrative fMRI dataset; learning multiple DAGs {\it without} an exchangeability assumption. [Here we simultaneously estimate both subject-specific DAGs and the network $A$ that relates subjects. The regularity hyperparameter $\lambda = 4$ was fixed whilst the density hyperparameter $\eta$ was varied.]}
\label{N neuro}
\end{figure}

\subsection{Learning multiple DAGs without exchangeability}

\begin{figure}[t]
\centering
\begin{subfigure}[b]{0.49\textwidth}
\includegraphics[width = \textwidth,clip,trim = 0cm 6.5cm 0cm 0cm]{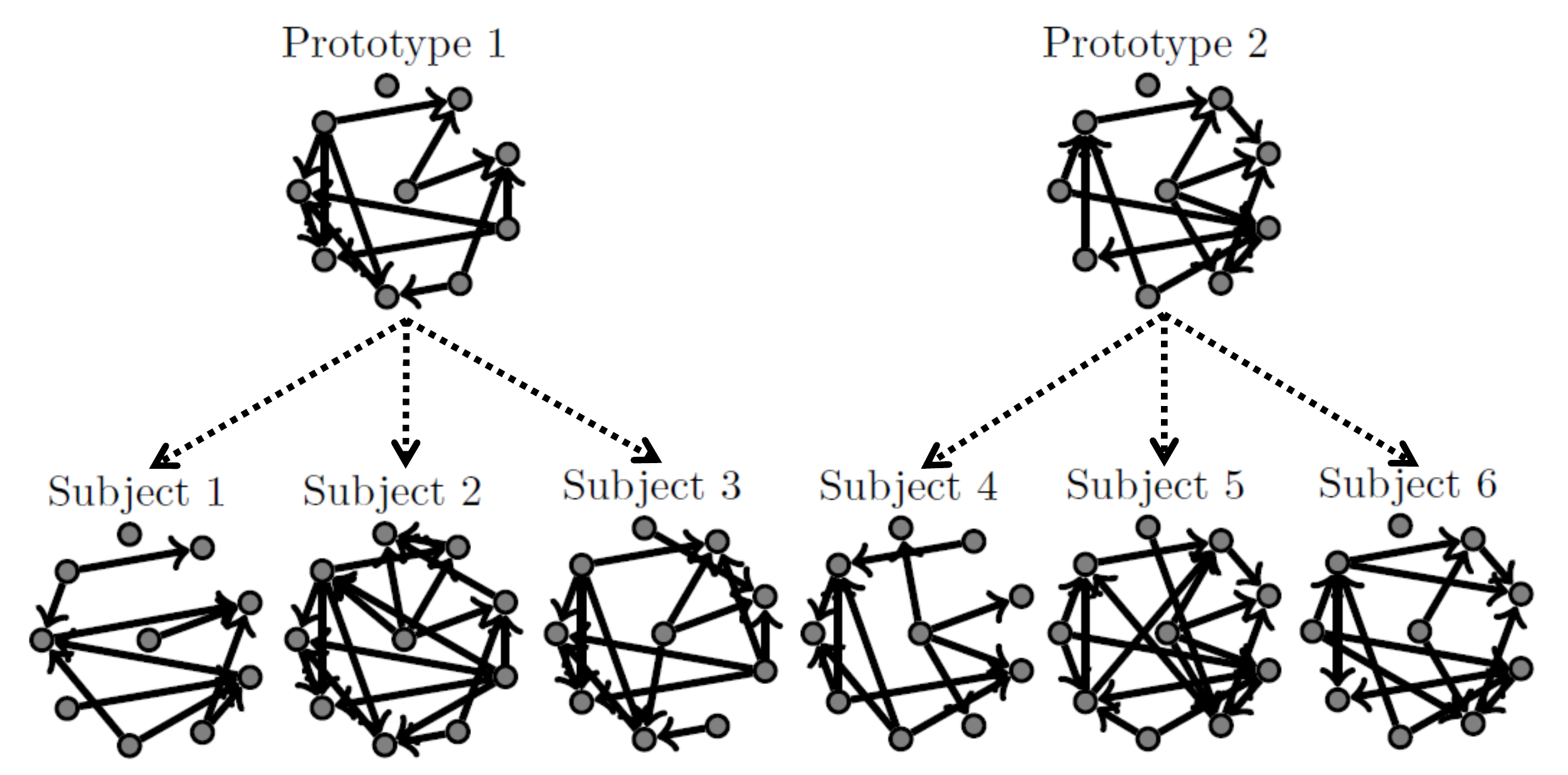} 
\caption{}
\label{km l2}
\end{subfigure}
\begin{subfigure}[b]{0.49\textwidth}
\includegraphics[width = \textwidth,clip,trim = 0cm 6.5cm 0cm 0cm]{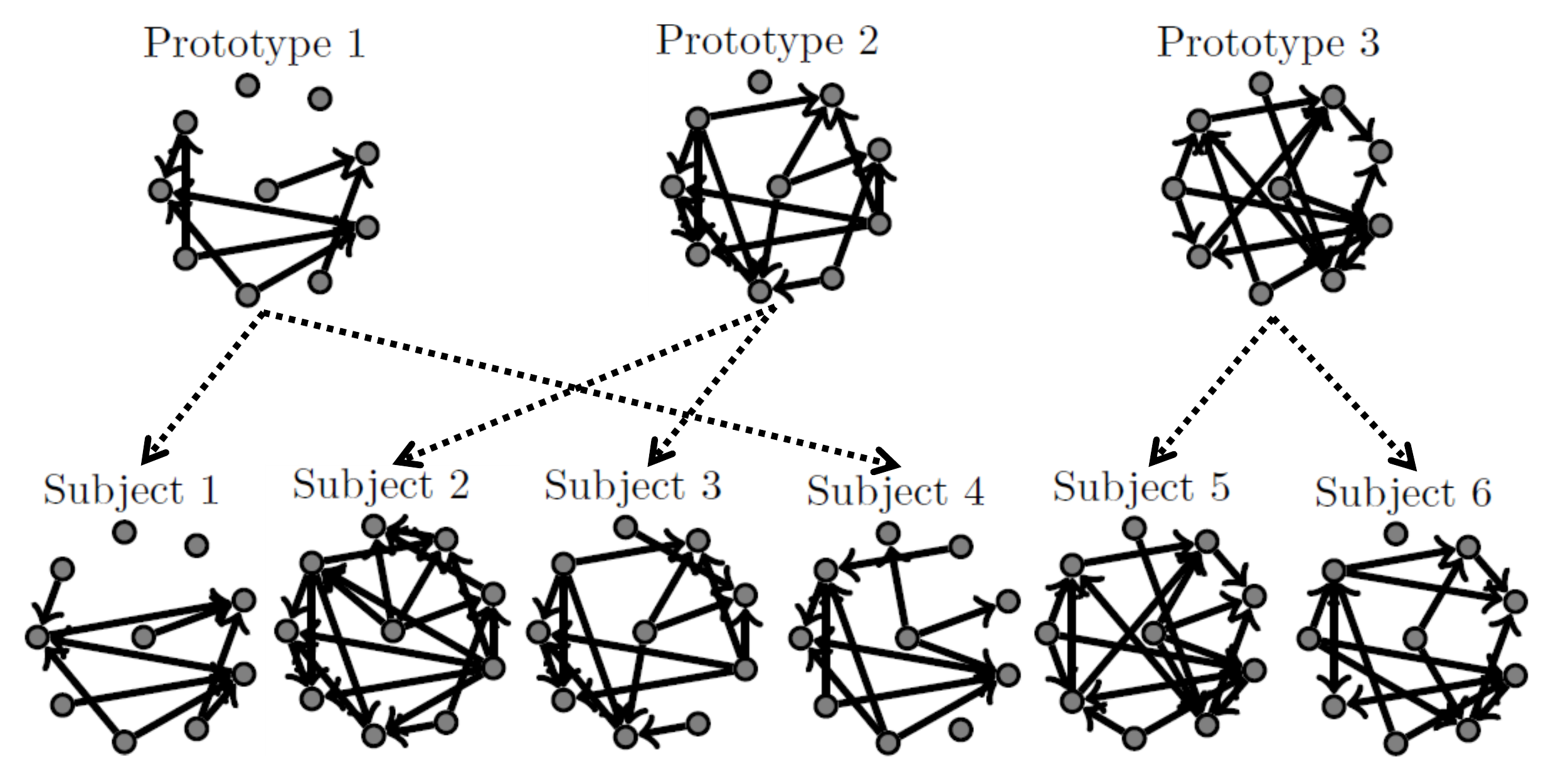} 
\caption{}
\label{km l3}
\end{subfigure}
\caption{Illustrative fMRI dataset; $k$-means clustering of DAGs with (a) $L=2$ clusters and (b) $L=3$ clusters. [We simultaneously estimate subject-specific DAGs, their cluster assignments (dashed edges) and the prototypes that summarise graphical structure within each cluster. The regularity hyperparameter was fixed at $\lambda = 4$.]}
\end{figure}

The scientific motivation for multi-subject analysis is typically to elucidate differential connectivity between subjects, either in a purely unsupervised context for exploratory investigation, or in a supervised context to determine whether certain features of connectivity are associated with auxiliary covariates of interest such as disease status.
In these cases a statistical model that assumes exchangeability between subjects may be inappropriate and ``regularise away'' the differential connectivity that is of interest.
We therefore proceed to jointly estimate both subject-specific DAGs $G^{(k)}$ and the network $A$ that describes relationships between the subjests themselves (Eqn. \ref{opt2}).

Elicitation of the hyperparameter $\eta$, that controls density of the network $A$, could again be performed by retrospective inspection of the posterior.
For our resting state fMRI dataset we would proceed by requiring (i) a moderate amount of similarity between subjects, motivated by expectation that connectivity should not differ substantially between subjects, and (ii) a moderate amount of heterogeneity between subjects, since we aim to highlight any potential differences between the neural connectivity of different subjects.
Results in Figure \ref{N neuro} demonstrate that for $\eta = 60$ the six subjects are regularised into three distinct components $\{1 ,4\}$, $\{2 , 3\}$, $\{5 , 6\}$, whilst for the higher value $\eta = 70$ the subjects are regularised into two distinct components $\{1,2,3,4\}$, $\{5,6\}$.
(When $\eta = 80$ the network $A$ is complete and subject-specific DAGs coincide with Fig. \ref{joint_neuro}.)
Examination of the Bayes factor as a function of $\eta$ demonstrates that the values $\eta = 60$, $70$ provide considerably better estimates compared to the DAGs obtained under an exchangeability assumption (log-Bayes factor $\approx 200$, $180$ respectively). 
This suggests that group and sub-group structure may be present amoung the subjects at the level of neural connectivity and provides evidence against exchangeability of the subjects.

Finally we illustrate an alternative and novel approach to learning similarities between subjects, called $k$-means clustering of DAGs, that does not assume exchangeability of the subjects.
In brief, additional latent DAGs $G^{(K+1)}, \dots, G^{(K+L)}$ are introduced that represent cluster centres or ``prototypes'', summarising the typical DAG structure within their cluster.
The (unknown) network $A$ on the extended vertex set $1,\dots,K+L$ is constrained to have edges that connect each of the vertices $1,\dots,K$ to precisely one of the vertices $K+1,\dots,K+L$, so that estimation of $A$ corresponds exactly to Bayesian model-based clustering.
Our methodology thereby facilitates joint estimation of both subject-specific DAGs and their optimal cluster assignment \citep{Oates6}.
(Note that, like in any mixture model, the optimal cluster assignment $A$ is defined only up to permutation of the cluster labels $K+1, \dots, K+L$.)
Here we applied $k$-means clustering of DAGs to the six subjects using $L=2$ clusters (Fig. \ref{km l2}) and $L=3$ clusters (Fig. \ref{km l3}).
The optimal cluster assignment with $L=3$ recovers the three distinct components $\{1 ,4\}$, $\{2 , 3\}$, $\{5 , 6\}$ that were obtained above via joint estimation of $A$, whilst the optimal cluster assignment with $L=2$ was $\{1,2,3\}$, $\{4,5,6\}$ which differs from the assignment obtained above in the position of the fourth subject only.
This analysis provides an alternative route to investigate similarity between the subjects and offers an alternative route to subjective elicitation of the $\eta$ hyperparameter.
We note that the prototypes that summarise cluster-specific graphical structure may be useful as summary statistics for the purposes of dimensionality reduction.

\FloatBarrier

\section{Discussion}

In neuroscience experiments it is increasingly common for data to be collected from multiple subjects whose neural connectivities are likely to be related but non-identical.
To uncover the causal mechanisms that underpin neural signalling it is necessary to work within a formal statistical theory for inferred causation, the most well-studied of which is rooted in DAGs \citep{Pearl}.
Yet until recently exact estimation for multiple related DAGs was computationally infeasible.
In this letter we have illustrated, using a small fMRI dataset, how recent algorithmic advances enable sophisticated causal inference using multi-subject experimental data.
In particular we have seen how novel statistical models, that do not assume exchangeability between subjects, achieve both a better description of the data (in terms of Bayes factors) and enable the more robust inference of subject-specific connectivity.

The model class that we discuss is large and allows for multiple opportunities to integrate prior knowlege, pertaining to both (i) the connectivity between specific neural regions, and (ii) additional covariates that might associate with subject-specific connectivity, such as age, gender or disease status.
The integration of auxiliary covariate data and the more general experimental validation of our techniques require an extensive and thorough investigation involving many more subjects than we analyse here; this is now the focus of our ongoing research. 

We focused on a particularly simple formulation with two tuning parameters and illustrated through application how both tuning parameters could be elicited retrospectively through examination of MAP estimates. 
This methodology extends naturally to highly structured datasets, for example where each subject is asked to provide multiple fMRI time courses. In these cases a combination of the techniques discussed above would permit all data on a particular subject to be aggregated into a single ``prototype'' and then estimation to proceed on the basis of these prototypes.

At present an analysis involving $K\leq 10$ subjects and DAGs of size $P \leq 10$ requires a few minutes' serial computation on a 2.10GHz Intel Core i7 CPU Windows host with 8GB memory. 
Our ongoing research focuses on reducing this computational burden so that exact estimation becomes feasible for much larger datasets that include hundreds of neural regions.
Recent advances in estimation of single DAGs involving thousands of nodes suggests that much progress can be made in this direction \citep{Barlett,Sheehan}.

Causal inference for neural connectivity is central to the study of brain functionality \citep{Smith2,Friston} and we envisage that the techniques presented here will play an important r$\hat{\text{o}}$le in the future analysis of multi-subject experimental data.

\section*{Appendix}

\begin{proof}[Proof of Property \ref{mix1}]
(a) This follows since when $\eta = 0$ the DAGs $G^{(1)}, \dots , G^{(k)}$ are {\it a priori} independent. Since the likelihood also factorises over $k$ it follows that the DAGs $G^{(1)}, \dots , G^{(k)}$ are independent in the posterior. 

(b) The objective that we wish to maximise can be written as 
\begin{eqnarray}
\bm{f}^T\bm{x} = \eta [(k,l) \in A] -\lambda \sum_{i=1}^P \sum_{j=1}^P [ (j \in G_i^{(k)}) \oplus (j \in G_i^{(l)}) ] \cap [(k,l) \in A] + C
\end{eqnarray} 
where $\eta > 0$ and $C$ does not depend on $[(k,l) \in A]$.
When $\hat{G}^{(k)} = \hat{G}^{(l)}$ it follows that the middle term is zero and hence to maximise this objective we must take $\hat{A}$ such that $[(k,l) \in \hat{A}] = 1$. 

(c,d) To prove both statements we can take 
\begin{eqnarray}
\lambda^* = \eta^* = \sum_{k=1}^K \sum_{i=1}^P \left[\max_{\pi \in \{1:P\}\setminus\{i\}} s^{(k)}(i,\pi)\right] - \left[\min_{\pi \in \{1:P\}\setminus\{i\}} s^{(k)}(i,\pi)\right].
\end{eqnarray}
For (c) note that if $\lambda > \lambda^*$ and $(k,l) \in A$, then the choice $G^{(k)} = G^{(l)}$ strictly maximises the objective function, since a selection $G^{(k)} \neq G^{(l)}$ incurs a penalty of at least $\lambda^*$ that cannot be compensated for by an increase in the likelihood term $\sum_{k=1}^K \sum_{i=1}^P \sum_{\pi \subseteq \{1:P\}\setminus\{ i \}} s^{(k)}(i,\pi) [G_i^{(k)} = \pi]$.
Similarly for (d), we have that $(k,l) \notin A$ incurs a penalty of at least $\eta^*$ that cannot be compensated for by an increase in the likelihood term. 
\end{proof} 

\begin{proof}[Proof of Property \ref{nonmono1}]
(a) Consider the following simple system with two variables and two individuals. 
Individual 1 has parent set scores $s^{(1)}(1,\{\}) = 0$, $s^{(1)}(1,\{2\})=-3$ for variable 1 and $s^{(1)}(2,\{\})=0$, $s^{(1)}(2,\{1\})=1$ for variable 2. 
Individual 2 has parent set scores $s^{(2)}(1,\{\})=0$, $s^{(2)}(1,\{2\})=4$ for variable 1 and $s^{(2)}(2,\{\})=0$, $s^{(2)}(2,\{1\})=1$ for variable 2. 
Then it is directly verified that for $0 \leq \lambda < 1$, $\hat{G}^{(1)} = 1 \rightarrow 2$ and $\hat{G}^{(2)} = 2 \rightarrow 1$, for $1 < \lambda < 2$, $\hat{G}^{(1)}$ has no edges and $\hat{G}^{(2)} = 2 \rightarrow 1$ and for $\lambda > 2$, $\hat{G}^{(1)} = 1 \rightarrow 2$ and $\hat{G}^{(2)} = 1 \rightarrow 2$. In particular, the edge $(1,2)$ is present in $\hat{G}^{(1)}$ for $\lambda \in [0,1) \cup (2,\infty)$ but absent for $\lambda \in (1,2)$. 

To embed the above example in a larger system with $P$ variables and $K$ individuals we proceed as follows:
Without loss of generality, assume $A(1,2) = 1$. For all variables in $G^{(1)}$ and $G^{(2)}$, assign scores $- \infty$ to any parent set $\pi$ that contains variables from both $\{1,2\}$ and $\{3,\dots,P\}$.
For variables $\{3, \dots, P\}$ in $G^{(1)}$ and $G^{(2)}$, and all variables in individuals $\{3,\dots,K\}$, take all scores to be zero (i.e. non-informative). 
Then the above proof demonstrates that the edge $(1,2)$ is present in $\hat{G}^{(1)}$ for $\lambda \in [0,1) \cup (2,\infty)$ but absent for $\lambda \in (1,2)$. 

(b)
Consider the following simple system with two variables and four individuals. 
Individual 1 has parent set scores $s^{(1)}(1,\{\})=0$, $s^{(1)}(1,\{2\})=0$ for variable 1 and $s^{(1)}(2,\{\})=0$, $s^{(1)}(2,\{1\})=1$ for variable 2.
Individual 2 has parent set scores $s^{(2)}(1,\{\})=0$, $s^{(2)}(1,\{2\})=0$ for variable 1 and $s^{(2)}(2,\{\})=0$, $s^{(2)}(2,\{1\})=2$ for variable 2. 
Individual 3 has parent set scores $s^{(3)}(1,\{\})=0$, $s^{(3)}(1,\{2\})=2$ for variable 1 and $s^{(3)}(2,\{\})=0$, $s^{(3)}(2,\{1\})=0$ for variable 2.  
Individual 4 has parent set scores $s^{(4)}(1,\{\})=0$, $s^{(4)}(1,\{2\})=3$ for variable 1 and $s^{(4)}(2,\{\})=0$, $s^{(4)}(2,\{1\})=0$ for variable 2.
Take $\lambda > \lambda^*$, as defined in Property \ref{mix1}, so that $\hat{G}^{(k)} = \hat{G}^{(l)}$ whenever $k \sim_{\hat{A}} l$.
Then it is directly verified that for $0 \leq \eta < 1$, $\hat{G}^{(1)} = 1 \rightarrow 2$, $\hat{G}^{(2)} = 1 \rightarrow 2$, $\hat{G}^{(3)} = 2 \rightarrow 1$, $\hat{G}^{(4)} = 2 \rightarrow 1$, and $\hat{A} = \{(1,2),(3,4)\}$; for $1 < \eta < 3/2$, $\hat{G}^{(1)} = 2 \rightarrow 1$, $\hat{G}^{(2)} = 1 \rightarrow 2$, $\hat{G}^{(3)} = 2 \rightarrow 1$, $\hat{G}^{(4)} = 2 \rightarrow 1$, and $\hat{A} = \{(1,3),(3,4),(4,1)\}$; for $\eta > 3/2$,  $\hat{G}^{(1)} = 2 \rightarrow 1$, $\hat{G}^{(2)} = 2 \rightarrow 1$, $\hat{G}^{(3)} = 2 \rightarrow 1$, $\hat{G}^{(4)} = 2 \rightarrow 1$, and $\hat{A}$ is the complete network.
In particular, the edge $(1,2)$ is present in $\hat{A}$ for $\lambda \in [0,1) \cup (3/2,\infty)$ but absent for $\lambda \in (1,3/2)$. 
\end{proof}

\section*{Acknowledgements}
CJO is supported by the Centre for Research in Statistical Methodology (CRiSM) UK EPSRC EP/D002060/1.
LC is supported by Coordena\c{c}\~ao de Aperfei\c{c}oamento de Pessoal de N\'ivel Superior (CAPES), Brazil.
TEN is supported by the Wellcome Trust, 100309/Z/12/Z and 098369/Z/12/Z, and by NIH grants U54 MH091657-03, R01 NS075066-01A1 and R01 EB015611-01.
The authors are grateful to James Cussens, Jim Smith and Sach Mukherjee for many helpful discussions on the methodology that is presented here, and to Stephen Smith for the preprocessing and preparation of fMRI data.


\begin{thebibliography}{99}





\bibitem[Badillo {\it et al.}, 2013]{Badillo} Badillo, S., Vincent, T., \& Ciuciu, P. (2013) Group-level impacts
of within- and between-subject hemodynamic variability in fMRI. {\it Neuroimage} {\bf 82}:433-448.

\bibitem[Bartlett and Cussens, 2013]{Barlett} Bartlett, M., \& Cussens, J. (2013) Advances in Bayesian Network Learning using Integer Programming. {\it Proceedings of the 29th Conference on Uncertainty in Artificial Intelligence}: 182-191.



\bibitem[Beckmann and Smith, 2002]{Beckmann} Beckmann, C.F., \& Smith, S.M. (2002) Probabilistic independent component analysis for functional magnetic resonance imaging. {\it IEEE T. Med. Imaging} {\bf 23}:137-152.






\bibitem[Chandrasekaran {\it et al.}, 2012]{Chandrasekaran} Chandrasekaran, V., Parrilo, P.A., \& Willsky, A.S. (2012) Latent Variable Graphical Model Selection via Convex Optimisation. {\it Ann. Stat.} {\bf 40}(4):1935-1967.

\bibitem[Chen and Chen, 2008]{Chen} Chen, J., Chen, Z. (2008) Extended bayesian information criteria for model selection with large model spaces. {\it Biometrika} {\bf 95}(3):759-771.

\bibitem[Chickering, 2003]{Chickering} Chickering, D.M. (2003) Optimal structure identification with greedy search. {\it J. Mach. Learn. Res.} {\bf 3}:507-554. 


\bibitem[Claassen and Heskes, 2012]{Claassen} Claassen, T., \& Heskes, T. (2012) A Bayesian Approach to Constraint Based Causal Inference. {\it Proceedings of the 28th Conference on Uncertainty in Artificial Intelligence}: 207-216.

\bibitem[Cole {\it et al.}, 2010]{Cole} Cole, D.M., Smith, S.M., \& Beckmann, C.F. (2010) Advances and pitfalls in the analysis and interpretation of resting-state fMRI data. {\it Front. Syst. Neurosci.} {\bf 4}:8.

\bibitem[Consonni and La Rocca, 2010]{Consonni} Consonni, G., \& La Rocca, L. (2010) Moment Priors for Bayesian Model Choice with Applications to Directed Acyclic Graphs. {\it Bayesian Statistics} {\bf 9}(9):119-144.

\bibitem[Costa {\it et al.}, 2013a]{Costa} Costa, L., Smith, J.Q., \& Nichols, T. (2013) On the selection of Multiregression Dynamic Models of fMRI networked time series. {\it CRiSM Working Paper, University of Warwick} {\bf 13}:6.

\bibitem[Costa {\it et al.}, 2013b]{Costa2} Costa, L., Smith, J., Nichols, T., \& Cussens, J. (2013) Searching Multiregression Dynamical Models of Resting-State fMRI Networks Using Integer Programming. {\it CRiSM Working Paper, University of Warwick} {\bf 13}:20. 





\bibitem[Danaher {\it et al.}, 2014]{Danaher} Danaher, P., Wang, P., \& Witten, D.M. (2014) The joint graphical lasso for inverse covariance estimation across multiple classes. {\it J. R. Statist. Soc. B} {\bf 76}(2):373-397.


\bibitem[Dawid and Lauritzen, 1993]{Dawid} Dawid, A.P., \& Lauritzen, S.L. (1993) Hyper Markov laws in the statistical analysis of decomposable graphical models. {\it Ann. Stat.} {\bf 21}(3):1272-1317.

\bibitem[Dawid, 2010]{Dawid2} Dawid, A.P. (2010). Beware of the DAG! {\it J. Mach. Learn. Res. W \& CP} {\bf 6}:59-86.



\bibitem[Diekman {\it et al.}, 2014]{Diekman} Diekman, C., Dasgupta, K., Nair, V., \& Unnikrishnan, K.P. (2014) Discovering Functional Neuronal Connectivity from Serial Patterns in Spike Train Data. {\it Neural Comput.} {\bf 26}:1263-1297.


\bibitem[Ellis and Wong, 2008]{Ellis} Ellis, B., \& Wong, W.H. (2008) Learning Causal Bayesian Network Structures From Experimental Data. {\it J. Am. Stat. Assoc.} {\bf 103}(482):778-789.

\bibitem[Foygel and Drton, 2013]{Foygel} Foygel, R., Drton, M. (2013) Bayesian model choice and information criteria in sparse generalized linear models. {\it Technical report, University of Chicago}, arXiv:1403.3374.

\bibitem[Friedman and Koller, 2003]{Friedman} Friedman, N., \& Koller, D. (2003) Being Bayesian about Network Structure: A Bayesian Approach to Structure Discovery in Bayesian Networks. {\it Mach. Learn.} {\bf 50}(1-2):95-126.

\bibitem[Friedman {\it et al.}, 2008]{Friedman2} Friedman, J., Hastie, T., \& Tibshirani, R. (2008) Sparse inverse covariance estimation with the graphical lasso. {\it Biostatistics} {\bf 9}(3):432-441.


\bibitem[Friston, 2011]{Friston} Friston,  K.J. (2011) Functional and Effective Connectivity: A review. {\it Brain Connectivity} {\bf 1}(1):13-36.







\bibitem[Hill {\it et al.}, 2012]{Hill} Hill, S., Lu, Y., Molina, J., Heiser, L.M., Spellman, P.T., Speed, T.P., Gray, J.W., Mills, G.B., \& Mukherjee, S. (2012) Bayesian Inference of Signaling Network Topology in a Cancer Cell Line. {\it Bioinformatics} {\bf 28}(21):2804-2810.





\bibitem[Li {\it et al.}, 2008]{Li} Li, J., Wang, Z.J., Palmer, S.J., \& McKeown, M.J. (2008) Dynamic Bayesian network modeling of fMRI: A comparison of group-analysis methods. {\it Neuroimage} {\bf 41}(2):398-407.

\bibitem[Loh and Wainwright, 2013]{Loh} Loh, P.-L., \& Wainwright, M.J. (2013) Structure Estimation for Discrete Graphical Models: Generalized Covariance Matrices and Their Inverses. {\it Ann. Stat.} {\bf 41}(6):3022-3049. 




\bibitem[Marquand {\it et al.}, 2014]{Marquand} Marquand, A.F., Brammer, M., Williams, S.C., \& Doyle, O.M. (2014) Bayesian multi-task learning for decoding multi-subject neuroimaging data. {\it Neuroimage} {\bf 92}:298-311.

\bibitem[Mechelli {\it et al.}, 2002]{Mechelli} Mechelli, A., Penny, W.D., Price, C.J., Gitelman, D.R., \& Friston, K.J. (2002) Effective connectivity and intersubject variability: Using a multisubject network to test differences and commonalities. {\it Neuroimage} {\bf 17}(3):1459-1469.

\bibitem[Meinshausen and B\"{u}hlmann, 2006]{Meinshausen} Meinshausen, N., \& B\"{u}hlmann, P. (2006) High-dimensional graphs and variable selection with the lasso. {\it Ann. Stat.} {\bf 34}(3):1436-1462.





\bibitem[Mumford and Nichols, 2009]{Mumford} Mumford, J.A., \& Nichols, T. (2009) Simple group fMRI
modeling and inference. {\it Neuroimage} {\bf 47}(4):1469-1475.


\bibitem[Nemhauser and Wolsey, 1988]{Nemhauser} Nemhauser, G.L., Wolsey, L.A. (1988) {\it Integer and Combinatorial Optimization.} Wiley, New York.







\bibitem[Oates {\it et al.}, 2014]{Oates6} Oates, C.J., Smith, J.Q., Mukherjee, S., \& Cussens, J. (2014) Exact Estimation of Multiple Directed Acyclic Graphs. {\it CRiSM Working Paper, University of Warwick} {\bf 14}:07.

\bibitem[Oyen and Lane, 2013]{Oyen} Oyen, D., \& Lane, T. (2013) Bayesian Discovery of Multiple Bayesian Networks via Transfer Learning. {\it Proceedings of the IEEE 13th International Conference on Data Mining}: 577-586.

\bibitem[Pearl, 2009]{Pearl} Pearl, J. (2009) {\it Causality: models, reasoning and inference (2nd ed.)}. Cambridge University Press.






\bibitem[Poldrack {\it et al.}, 2011]{Poldrack} Poldrack, R.A., Mumford, J.A., \& Nichols, T.E. (2011)  {\it Handbook of fMRI Data Analysis}, Cambridge University Press.

\bibitem[Queen and Smith, 1993]{Queen} Queen, C.M., \& Smith, J.Q. (1993) Multiregression dynamic models. {\it J. R. Statist. Soc. B} {\bf 55}(4):849-870.

\bibitem[Queen and Albers, 2009]{Queen2} Queen, C.M., \& Albers, C.J. (2009) Intervention and causality: Forecasting traffic flows using a dynamic Bayesian network. {\it J. Am. Stat. Assoc.} {\bf 104}(486):669-681.

\bibitem[Ramsey {\it et al.}, 2009]{Ramsey} Ramsey, J., Hanson, S., Hanson, C., Halchenko, Y., Poldrack, R., Glymour, C. (2009) Six problems for causal inference from fMRI. {\it Neuroimage} {\bf 49}:1545-1558.

\bibitem[Ringach, 2009]{Ringach} Ringach, D.L. (2009) Spontaneous and driven cortical activity: implications for computation. {\it Curr. Opin. Neurobiol.} {\bf 19}:439-444.


\bibitem[Sanyal and Ferreira, 2012]{Sanyal} Sanyal, N., \& Ferreira, M.A. (2012) Bayesian hierarchical
multi-subject multiscale analysis of functional MRI data. {\it Neuroimage} {\bf 63}(3):1519-1531.


\bibitem[Scott and Berger, 2010]{Scott} Scott, J.G., \& Berger, J.O. (2010) Bayes and Empirical-Bayes Multiplicity Adjustment in the Variable-Selection Problem. {\it Ann. Stat.} {\bf 38}(5):2587-2619.

\bibitem[Sheehan {\it et al.}, 2014]{Sheehan} Sheehan, N.A., Bartlett, M., \& Cussens, J. (2014) Maximum Likelihood Reconstruction of Very Large Pedigrees. {\it Theoretical Population Biology}, Submitted.


\bibitem[Smith {\it et al.}, 2011]{Smith2} Smith, S.M., Miller, K.L., Salimi-Khorshidi, G., Webster, M., Beckmann, C. F., Nichols, T.E., ... \& Woolrich, M.W. (2011) Network modelling methods for fMRI. {\it Neuroimage} {\bf 54}(2):875-891.

\bibitem[Smith {\it et al.}, 2013]{Smith} Smith, S.M., Beckmann, C.F., Andersson, J., Auerbach, E.J., Bijsterbosch, J., Douaud, G., ... \& Glasser, M.F. (2013) Resting-state fMRI in the Human Connectome Project. {\it Neuroimage} {\bf 80}:144-168.


\bibitem[Stein {\it et al.}, 2007]{Stein} Stein, J.L., Wiedholz, L.M., Bassett, D.S., Weinberger, D.R., Zink, C.F., Mattay, V.S., \& Meyer-Lindenberg, A. (2007) A validated network of effective amygdala connectivity. {\it Neuroimage} {\bf 36}(3):736-745.



\bibitem[Sugihara {\it et al.}, 2006]{Sugihara} Sugihara, G., Kaminaga, T., \& Sugishita, M. (2006) Interindividual uniformity and variety of the ``Writing center": A functional MRI study. {\it Neuroimage} {\bf 32}(4):1837-1849.





\bibitem[Tsamardinos {\it et al.}, 2006]{Tsamardinos} Tsamardinos, I., Brown, L.E., Aliferis, C.F. (2006) The max-min hill-climbing Bayesian network structure learning algorithm. {\it Mach. Learn.} {\bf 65}(1):31-78.

\bibitem[Valdes-Sosa {\it et al.}, 2011]{Valdes} Valdes-Sosa, P.A., Roebroeck, A., Daunizeau, J., \& Friston, K. (2011) Effective connectivity: influence, causality and biophysical modeling. {\it Neuroimage} {\bf 58}(2):339-361.

\bibitem[Van Essen {\it et al.}, 2013]{VanEssen} Van Essen, D.C., Smith, S.M., Barch, D.M., Behrens, T.E., Yacoub, E., \& Ugurbil, K. (2013) The WU-Minn Human Connectome Project: An Overview. {\it Neuroimage} {\bf 80}:62-79.

\bibitem[Waldorp {\it et al.}, 2011]{Waldorp} Waldorp, L., Christoffels, I., van de Ven, V. (2011) Effective connectivity of fMRI data using ancestral graph theory: Dealing with missing regions. {\it Neuroimage} {\bf 54}(4):2695-2705.

\bibitem[Werhli and Husmeier, 2008]{Werhli} Werhli, A.V., \& Husmeier, D. (2008) Gene regulatory network reconstruction by Bayesian integration of prior knowledge and/or different experimental conditions. {\it Journal of Bioinformatics and Computational Biology} {\bf 6}(3):543-572.

\bibitem[West and Harrison, 1997]{West} West, M., \& Harrison, P.J. (1997) {\it Bayesian Forecasting and Dynamic Models (2nd ed).} Springer-Verlag, New York.






\end{thebibliography}
\end{document}